\documentclass[12pt]{article}
\usepackage{amsmath,amsthm,amsfonts}
\usepackage{fullpage}

\newtheorem{theorem}{Theorem}

\newtheorem{lemma}[theorem]{Lemma}

\title{Detecting patterns in finite regular and context-free languages}

\author{Narad Rampersad\\
Department of Mathematics and Statistics\\
University of Winnipeg\\
Winnipeg, MB  R3B 2E9, Canada\\
{\tt n.rampersad@uwinnipeg.ca}\medskip\\
Jeffrey Shallit\\
School of Computer Science\\
University of Waterloo\\
Waterloo, ON  N2L 3G1, Canada\\
{\tt shallit@graceland.uwaterloo.ca}}

\begin{document}
\date{\today}
\maketitle

\begin{abstract}
We consider variations on the following problem: given an NFA $M$ over the
alphabet $\Sigma$ and a pattern $p$ over some alphabet $\Delta$,
does there exist an $x \in L(M)$ such that
$p$ matches $x$?  We consider the restricted problem where $M$ only
accepts a finite language.  We also consider the variation where the
pattern $p$ is required only to match a factor of $x$.  We show that
both of these problems are NP-complete.  We also consider the same problems
for context-free grammars; in this case the problems become PSPACE-complete.
\end{abstract}

\section{Introduction}

    The computational complexity of pattern matching has received much
attention in the literature.  Although determining whether a given word
appears inside another can be done in linear time, other pattern-matching
problems appear
to be computationally intractable.  In a classic paper, Angluin \cite{Ang80} 
showed the problem of determining if an arbitrary pattern matches an
arbitrary string is NP-complete.  More recently, 
Anderson et al.\ \cite{ALR+09} showed that pattern matching becomes
PSPACE-complete if we are trying to match a pattern against words of a
language specified by a DFA, NFA, or regular expression.  

     In this paper we consider some variations on the pattern matching
problem.  We begin by fixing our notation.

Let $\Sigma$ be an alphabet, i.e., a nonempty, finite set
of symbols (letters). By $\Sigma^*$ we denote the set of
all finite words (strings of symbols) over $\Sigma$, and by
$\epsilon$, the empty word (the word having zero
symbols).  If $w = xyz$, then $y$ is said to be a \emph{factor} of $w$.

Let $k \geq 2$ be an integer.  A word $y$ is a
\emph{$k$-power} if $y$ can be written as $y = x^k$ for
some non-empty word $x$.  A $2$-power is called a \emph{square}.
Patterns are a generalization of powers.  A \emph{pattern}
is a non-empty word $p$ over a \emph{pattern alphabet} $\Delta$.  The
letters of $\Delta$ are called \emph{variables}.    A \emph{morphism} is
a map $h:\Sigma^* \rightarrow \Delta^*$ such that $h(xy) = h(x)h(y)$ for
all $x, y \in \Sigma$; a morphism is \emph{non-erasing} if 
$h(a) \not= \epsilon$ for all $a \in \Sigma$.  A pattern
$p$ \emph{matches} a word $w \in \Sigma^*$ if there exists a non-erasing
morphism $h : \Delta^* \to \Sigma^*$ such that $h(p) = w$.  Thus,
a word $w$ is a $k$-power if it matches the pattern $a^k$.

As mentioned above,
Anderson et al.\ \cite{ALR+09} proved that proved that the following problem
is PSPACE-complete.

\begin{quotation}
\noindent{\bf DFA/NFA PATTERN ACCEPTANCE}

\noindent INSTANCE: A DFA or NFA $M$ over the alphabet $\Sigma$ and a
pattern $p$ over some alphabet $\Delta$.

\noindent QUESTION: Does there exist $x \in L(M)$ such that
$p$ matches $x$?
\end{quotation}

In this paper we consider variations on this problem.
We consider the restricted problem where the input machine only
accepts a finite language.  We also consider the variation where the
pattern $p$ is required only to match a factor of $x$.    We show that
both of these problems are NP-complete.  We also consider the same problems
for context-free grammars; in this case the problems become PSPACE-complete.

\section{Detecting patterns in finite regular languages}

We first recall the {\bf DFA INTERSECTION} problem.  This problem is
well-known to be PSPACE-complete \cite[Problem AL6]{GJ79}.

\begin{quotation}
\noindent{\bf DFA INTERSECTION}

\noindent INSTANCE: DFAs $A_1,A_2,\ldots,A_k$, each over the alphabet $\Sigma$.

\noindent QUESTION: Does there exist $x \in \Sigma^*$ such that $x$
is accepted by each $A_i$, $1 \leq i \leq k$?
\end{quotation}

\begin{theorem}\label{fin_DFA}
The {\bf DFA INTERSECTION} problem is NP-complete if the input
DFAs only accept finite languages.
\end{theorem}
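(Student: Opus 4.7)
The goal is to show both containment in NP and NP-hardness. For membership in NP, I would use the fact that a DFA $A_i$ with $n_i$ states accepting a finite language has all its accepted words of length at most $n_i - 1$ (otherwise the pigeonhole principle gives a loop on an accepting path, producing infinitely many words). Hence any string $x$ in $\bigcap_i L(A_i)$ has length at most $\min_i (n_i - 1)$, which is polynomial in the input size. The standard guess-and-verify approach then works: nondeterministically guess such an $x$ and simulate each $A_i$ on it in polynomial time.

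For NP-hardness I would reduce from \textbf{3-SAT}. Given a 3-CNF formula $\varphi$ with variables $x_1,\ldots,x_n$ and clauses $C_1,\ldots,C_m$, I would think of a truth assignment as a word $w \in \{0,1\}^n$, where $w[i]$ is the truth value of $x_i$. For each clause $C_j$, I would construct a DFA $A_j$ over $\{0,1\}$ that accepts exactly those words of length $n$ encoding assignments that satisfy $C_j$. Since $C_j$ constrains only three positions, $A_j$ can be built in polynomial time (essentially an $n$-state ``length counter'' augmented with a small amount of bookkeeping to verify that at least one of the three relevant literals is satisfied). Each $A_j$ accepts only words of length exactly $n$, so its language is finite. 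Then $\bigcap_j L(A_j)$ is nonempty iff $\varphi$ is satisfiable.

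\textbf{Main obstacle.} The construction itself is routine, and the only real subtlety is ensuring the NP upper bound. The key observation -- that a DFA for a finite language cannot accept any string longer than the number of states minus one -- is what rescues us from the PSPACE lower bound that holds in the unrestricted case. This bound is essential: without it, a witness $x$ could in principle be exponentially long, as happens in the general \textbf{DFA INTERSECTION} problem. Everything else amounts to standard polynomial-time constructions and verification.
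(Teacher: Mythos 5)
Your proposal is correct and follows essentially the same route as the paper: membership in NP via the observation that a DFA for a finite language accepts only words of length less than its number of states (so a witness can be guessed and verified in polynomial time), and hardness via a reduction from 3-SAT with one small clause-checking DFA per clause, each accepting only length-$n$ encodings of satisfying assignments. No gaps.
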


\begin{proof}
We first show that the problem is in NP.  Suppose that each $A_i$ has at most
$n$ states.  If $A_i$ accepts a finite language then it only accepts strings of
length less than $n$.  In particular any string accepted in common by all the
$A_i$'s has length less than $n$.  We can therefore guess such a string in
$O(n)$ time, and check if it is accepted by all of the $A_i$'s in $O(kn)$ time.

We show NP-completeness by reducing from 3-SAT.  Let
$\varphi$ be a boolean formula in 3-CNF, with variables $V_1, V_2,
\ldots, V_n$ and clauses $C_1, C_2, \ldots, C_m$.  We let a binary
string of length $n$ uniquely encode
a truth assignment of the variables, where $1$ denotes true and $0$ 
denotes false.
For each $1 \leq i \leq m$,
we construct a small DFA accepting exactly the strings of length $n$ that
encode an assignment of the variables $V_1, V_2, \ldots, V_n$
that satisfies clause $C_i$.  For example, if $C_1 = V_1 \vee
\overline{V_2} \vee V_4$, then our DFA would accept the strings
$$1\{0,1\}^{n-1} \cup \{0,1\}0\{0,1\}^{n-2} \cup \{0,1\}^3 1\{0,1\}^{n-4}.$$
Such a DFA can be constructed using at most $2n+1$ states.
The total number of DFA's is the
total number of clauses, and their intersection is nonempty if and only if
there is a satisfying assignment to $\varphi$.  The construction can 
clearly be carried out in polynomial time.
\end{proof}

\begin{theorem}
The {\bf NFA PATTERN ACCEPTANCE} problem is NP-complete if the input NFA
$M$ accepts a finite language.  The problem remains NP-complete if $M$ is
deterministic.
\end{theorem}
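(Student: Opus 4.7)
The plan is to establish both NP-membership and NP-hardness, with the hardness reduction producing a DFA so that both the NFA and the DFA variants are handled simultaneously.

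For NP-membership, suppose $M$ has $n$ states and accepts a finite language; then every $x \in L(M)$ satisfies $|x| < n$. If a non-erasing morphism $h : \Delta^* \to \Sigma^*$ witnesses that $p$ matches such $x$, then since each variable of $p$ appears at least once, $\sum_{a \in \Delta} |h(a)| \le |h(p)| = |x| < n$. We guess the strings $h(a)$ for the variables appearing in $p$ (total size $O(n)$), compute $h(p)$, and verify $h(p) \in L(M)$ in polynomial time.

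For NP-hardness, I would reduce from the finite-language version of DFA INTERSECTION, which is NP-complete by Theorem~\ref{fin_DFA}. Given finite-language DFAs $A_1, \ldots, A_k$ over $\Sigma$, introduce a fresh separator $\#$ and construct a DFA $M$ over $\Sigma \cup \{\#\}$ accepting exactly
$$L(A_1) \, \# \, L(A_2) \, \# \, \cdots \, \# \, L(A_k) \, \#.$$
Because $\#$ occurs only as a boundary marker, the parse of any accepted string is unambiguous and $M$ can be built deterministically: states are pairs $(i,q)$ with $q$ a state of $A_i$, together with a sink and an accepting state; a $\#$ read in a non-accepting state of $A_i$ goes to the sink, while a $\#$ read in an accepting state of $A_i$ advances $i$. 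This gives $O(kn)$ states, and $L(M)$ is finite since each $L(A_i)$ is. Let the pattern be $p = a^k$, so that $p$ matches $x$ iff $x$ is a $k$-th power. The reduction is clearly polynomial.

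The correctness claim is that $\bigcap_i L(A_i) \neq \emptyset$ iff some $x \in L(M)$ is a $k$-th power. One direction is immediate: given $w$ in the intersection, $x = (w\#)^k \in L(M)$ is a $k$-th power. For the converse, suppose $x = y^k$ and simultaneously $x = w_1 \# w_2 \# \cdots \# w_k \#$ with $w_i \in L(A_i)$. The string $x$ contains exactly $k$ occurrences of $\#$, while $y^k$ contains $k$ times as many occurrences of $\#$ as $y$, so $y$ contains exactly one $\#$; since $x$ ends in $\#$ and the final symbol of $y^k$ is the final symbol of $y$, that unique $\#$ must lie at the last position of $y$. Writing $y = w\#$ forces $w_1 = w_2 = \cdots = w_k = w$, placing $w$ in the intersection. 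The main obstacle is precisely this alignment argument: without the trailing $\#$ the count $k-1$ of separators would not be a multiple of $k$, which is why the construction pads with a terminal separator. Once that detail is in place, the rest of the verification is routine.
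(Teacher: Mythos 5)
Your proposal is correct and follows essentially the same route as the paper: NP-membership by guessing a witness of size bounded by the number of states (justified via the pumping lemma), and NP-hardness by reducing from the finite-language \textbf{DFA INTERSECTION} problem via a DFA for $L(A_1)\#\cdots L(A_k)\#$ and the pattern $a^k$. You supply more detail than the paper does on the product construction and the $\#$-alignment argument, but the underlying idea is identical.
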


\begin{proof}
If $M$ has $n$ states, then $M$ accepts no word of length $n$ or more;
if it did, then by the pumping lemma $M$ would accept infinitely many
words.  We can therefore solve the pattern acceptance problem by guessing
a word $x$ in $L(M)$ of length less than $n$ and a morphism $h$
(also of bounded size) and verifying that $h(p) = x$ in polynomial time.

To show that the problem is NP-hard, we apply Theorem~\ref{fin_DFA}.
Given DFAs $A_1,A_2,\ldots,A_k$, each over the alphabet $\Sigma$, and
each accepting a finite language, we construct a DFA $M$ to accept
\[
L(A_1)\#\cdots L(A_k)\#,
\]
where \# is not in $\Sigma$.  The DFA $M$ accepts a $k$-power
(i.e., a word matching the pattern $a^k$) if and only if the intersection
of the $L(A_i)$'s is non-empty.
\end{proof}

\begin{theorem}
The problem ``Given an NFA $M$ and a pattern $p$, is there a non-erasing
morphism $h$ and a word $w$ in $L(M)$ such that $h(p)$ is a factor of $w$?''
is NP-complete.  The problem remains NP-complete if $M$ is deterministic.
\end{theorem}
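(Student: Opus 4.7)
My plan is to combine a length-bound argument for NP-membership with a reduction from the preceding theorem for NP-hardness.

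For NP-membership, I would use the implicit running assumption of this section that $M$ accepts a finite language: if $|M| = n$, then every word of $L(M)$, and hence every factor, has length less than $n$, so $|h(p)| < n$ and the whole morphism has polynomial-size description. One guesses that description and verifies in polynomial time that $h(p)$ lies in the factor automaton of $M$ (obtained from $M$ by declaring every reachable state initial and every co-reachable state final).

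For NP-hardness, I would reduce from the finite-language NFA PATTERN ACCEPTANCE problem just shown to be NP-complete. Given $(M,p)$ with $|M| = n$, pick a fresh symbol $\# \notin \Sigma$, let $M'$ accept $\#^n L(M) \#^n$, and set $p' = z^n p z^n$ for a fresh variable $z$. The easy direction is immediate: if $h(p) = w \in L(M)$, then extending by $h(z) = \#$ gives $h(p') = \#^n w \#^n \in L(M')$, trivially a factor of itself.

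The hard part is the converse: showing that any $h$ for which $h(p')$ is a factor of some $\#^n w \#^n$ must satisfy $h(z) = \#$ and $h(p) = w$. My plan is a two-step counting argument. First, $\#^n w \#^n$ contains exactly $2n$ occurrences of $\#$, whereas $h(p') = h(z)^n h(p) h(z)^n$ contains $2nk + j$, where $k$ and $j$ are the $\#$-counts in $h(z)$ and $h(p)$; this forces $k \le 1$, and $k = 0$ is excluded because $h(z)^n$ would then be a $\#$-free factor of length $\ge n$, exceeding the longest $\#$-free run in $\#^n w \#^n$ (namely $w$, of length $\le n-1$). Second, the length bound $|h(p')| = 2n|h(z)| + |h(p)| \le 2n + |w|$ together with $|h(z)| \ge 1$ and $|w| \le n-1$ forces $|h(z)| = 1$, so $h(z) = \#$; aligning the two $\#^n$ blocks inside $\#^n w \#^n$ then pins down $h(p) = w \in L(M)$. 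The construction runs in polynomial time and clearly preserves determinism of $M$, handling the deterministic case as well.
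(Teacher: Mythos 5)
Your NP-hardness reduction is fine, but your membership argument has a genuine gap: this theorem, unlike the preceding one, does \emph{not} assume that $L(M)$ is finite---the finiteness hypothesis is stated explicitly in the previous theorem and deliberately dropped here, so ``the implicit running assumption of this section'' is not available to you. For an arbitrary NFA $M$ the words of $L(M)$, and hence the candidate factors $h(p)$, need not be short, so the bound $|h(p)|<n$ and the polynomial-size certificate you rely on fail. The missing observation is that when $L(M)$ is infinite the answer is always ``yes'': by the pumping lemma $L(M)$ contains $xy^iz$ for all $i\ge 0$ with $y\neq\epsilon$, and the morphism sending every variable of $p$ to $y$ shows that $y^{|p|}$, an image of $p$, is a factor of an accepted word. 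Since infiniteness of $L(M)$ can be tested in polynomial time (look for a state on a cycle that is both reachable and co-reachable), an NP algorithm first performs this test, answers ``yes'' if $L(M)$ is infinite, and only otherwise runs your guess-and-check procedure, which is then sound because every word of $L(M)$ has length less than $n$. With that case added, membership is complete.

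For hardness your route differs from the paper's and is valid. The paper reduces directly from 3-SAT by modifying Angluin's construction: it pads the pattern with $v^{2n+6m}$ and the word with $0^{2n+6m}$ so that matching a factor forces matching the whole word, and takes $M$ to be the DFA accepting that single word. You instead reduce from the finite-language pattern-acceptance problem just proved NP-complete, via the generic padding $p'=z^n p\, z^n$ and $L(M')=\#^n L(M) \#^n$; your counting and length arguments do force $h(z)=\#$ (and, as a byproduct of the first count, that $h(p)$ is $\#$-free, which is exactly what makes the final alignment pin down $h(p)=w\in L(M)$). This is the same padding idea as the paper's, but used as a black-box reduction from the previous theorem rather than a re-examination of Angluin's proof; it is arguably more modular, it visibly preserves determinism, and it produces a finite-language instance, so it establishes hardness even for the restricted version of the problem.
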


\begin{proof}
To see that it is in NP, note that answer is always ``yes'' if $L(M)$ is
infinite (because then by the pumping lemma $L(M)$ contains $xy^*z$ for some
$x, y, z$, and if it contains arbitrarily high powers of $y$ then it contains
any pattern as a factor).  We can check if $L(M)$ is infinite in polynomial
time.  Otherwise, the size of the morphism $h$ is bounded, and we can guess
both $h$ and $w$ in polynomial time and verify that $h(p)$ is a factor of $w$.

To see the problem is NP-complete, we give a reduction from 3-SAT that
is a simple modification of the construction of Angluin
\cite[Theorem~3.6]{Ang80}.  Let $\varphi$ be a boolean formula in 3-CNF,
with variables $V_1, V_2, \ldots, V_n$ and clauses $C_1, C_2, \ldots, C_m$.
We define a pattern $p$ with variables $x_i, y_i$, $1 \leq i \leq n$ and 
$z_j, u_j$, $1 \leq j \leq m$, and $v$.  

For $1 \leq j \leq m$ and $1 \leq k \leq 3$, define
\begin{displaymath}
f(j,k) = \begin{cases}
	x_i, & \text{if the $k$'th literal in $C_j$ is $V_i$}; \\
	y_i, & \text{if the $k$'th literal in $C_j$ is $\overline{V_i}$}.
	\end{cases}
\end{displaymath}

Given $\varphi$, we define 
$$p = v^{2n+6m} \, v \, x_1 y_1 \, v \, x_2 y_2 \, v \, \cdots
\, v \, x_n y_n \, v \,  q_1 \, v \, q_2 \, v \, \cdots \, v\,  q_m 
\, v\,  z_1 u_1 \, v\,  z_2 u_2 \cdots \, v\,  z_m u_m \, v \, v^{2n+6m}$$
where $$q_j = f(j,1)f(j,2)f(j,3)z_j$$ for $ 1\leq j \leq m$, and
$$w = 0^{2n+6m} (01^3)^n (01^7)^m (01^4)^m 0 0^{2n+6m} .$$
We claim that $p$ matches a factor of $w$ if and only if $p$ matches
$w$ exactly if and only if $\varphi$ is satisfiable.
This can be established by an argument almost identical to that of
Angluin \cite[Theorem~3.6]{Ang80}.  The only difference is that
we have added $v^{2n+6m}$ to the beginning and end of $p$ and $0^{2n+6m}$
to the beginning and end of $w$ in order to enforce that $p$ matches
a factor of $w$ if and only if it matches $w$ itself.

Now we let $M$ be the $(n+2)$-state DFA that accepts the single
word $w$.  Given a 3-CNF formula $\varphi$ we can create
a DFA $M$ and pattern $p$ such that $\varphi$ is satisfiable if and only if
$p$ matches the single string accepted by $M$.
\end{proof}

\section{Detecting patterns in finite context-free languages}

We now consider the pattern acceptance problem for context-free languages.

\begin{theorem}
The following problem is undecidable: ``Given a CFG $G$, does $G$ generate
a square?''
\end{theorem}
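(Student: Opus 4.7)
The plan is to reduce from the Post Correspondence Problem (PCP), routing through the classical fact that CFL intersection emptiness is undecidable.

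First, I would recall (or briefly rederive) the standard PCP reduction showing that, given two CFGs, it is undecidable whether their languages share a common word. From a PCP instance with pairs $(u_i,v_i)_{i=1}^n$ and the associated morphisms $\sigma(i)=u_i$, $\tau(i)=v_i$, one takes a fresh marker $\#$ and forms
\[
A = \{w \, \# \, \sigma(w)^R : w \in \{1,\ldots,n\}^+\}, \qquad B = \{w \, \# \, \tau(w)^R : w \in \{1,\ldots,n\}^+\}.
\]
Each is context-free (a PDA can read $w$ while pushing the letters of $\sigma(i)$ or $\tau(i)$, then pop against the reversed image), and $A \cap B \ne \emptyset$ iff the PCP instance is solvable.

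The main step is to reduce CFL intersection emptiness to the square problem. Given CFGs generating $A, B \subseteq \Gamma^*$, I would introduce a fresh separator $\$ \notin \Gamma$ and, using closure of CFLs under concatenation, build a CFG $G$ with
\[
L(G) = A \cdot \$ \cdot B \cdot \$.
\]
I claim $L(G)$ contains a square if and only if $A \cap B \ne \emptyset$. The easy direction is immediate: any $w \in A \cap B$ produces the square $(w\$)^2 = w\$w\$ \in L(G)$.

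For the converse, suppose $yy \in L(G)$, so $yy = a \$ b \$$ with $a \in A$, $b \in B$. Since $\$ \notin \Gamma$, the word $yy$ contains exactly two occurrences of $\$$, which forces $y$ to contain exactly one; write $y = y_1 \$ y_2$ with $y_1,y_2 \in \Gamma^*$. Comparing $y_1 \$ y_2 y_1 \$ y_2$ with $a \$ b \$$, and noting that the whole word ends in $\$$ while $y_2 \in \Gamma^*$, forces $y_2 = \varepsilon$; hence $y = y_1 \$$, and matching halves yields $a = b = y_1 \in A \cap B$. Composing the two reductions gives undecidability. There is no real obstacle; the only point requiring care is this positional analysis of the $\$$-symbols, which is clean precisely because the marker is kept outside $\Gamma$.
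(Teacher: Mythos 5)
Your proposal is correct and is essentially the paper's own argument: both reduce from PCP and detect a solution as a square in a marker-separated concatenation of the two PCP-derived context-free languages (the paper writes this as $A\#B\#$ with the index symbols $c_i$ playing the role of your reversed index word $w$). The only difference is presentational --- you factor the reduction explicitly through CFL intersection emptiness and spell out the positional analysis of the separator, which the paper leaves implicit.
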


\begin{proof}
We reduce from the Post correspondence problem.  Given an instance of
Post correspondence, say $(x_1, y_1), \ldots, (x_n, y_n)$, we 
create a CFG 
$(V, \Sigma', P, S)$ as follows:  we introduce $n+1$ new symbols
$\#, c_1, c_2, \ldots, c_n$ not in $\Sigma$, and let
$\Sigma' = \Sigma \ \cup \ \lbrace \#, c_1, c_2, \ldots, c_n \rbrace$.
Also let $V = \lbrace A, B, S \rbrace$, and let $P$ be the set of
productions
\begin{eqnarray*}
S & \rightarrow & A\# B \# \\
A & \rightarrow & x_i A c_i \ | \ x_i c_i, \ \ \ 1 \leq i \leq n \\
B & \rightarrow & y_i B c_i \ | \ y_i c_i, \ \ \ 1 \leq i \leq n  .
\end{eqnarray*}
We claim $L(G)$ contains $xx$ if and only if the PCP instance
$(x_1, y_1), \ldots, (x_n, y_n)$ has a solution.
\end{proof}

The previous problem clearly becomes decidable if $G$ only generates
a finite language.  Next we consider the computational complexity of
this restricted version of the problem.  We first consider the problem
of deciding whether two PDAs, each accepting a finite language, accept
some word $x$ in common.

Let $M$ be an TM and let $w$ be an input to $M$.  Let us suppose without
loss of generality that all halting computations of $M$ on $w$ take an
even number of steps.  It is well-known (e.g., \cite[Lemma 8.6]{HU79})
that one can construct context-free
grammars $G_1$ and $G_2$ to generate languages $L_1$ and $L_2$ consisting
of words of the form
$c_1 \# c_2^R \# \cdots \# c_{k-1} \# c_k^R \#$, where
\begin{itemize}
\item Each $c_i$ encodes a valid configuration of $M$.
\item In $L_2$, the word $c_1$ encodes the initial configuration of $M$
on input $w$ and the word $c_k$ encodes a valid accepting configuration of $M$.
\item In $L_1$ (resp. $L_2$), the configuration $c_{i+1}$ follows
from configuration $c_i$ according to the transition function of $M$ for
all odd (resp. even) $i<k$.
\end{itemize}

Suppose now that $M$ is a polynomial space bounded TM; i.e., for some
polynomial $p(n)$, $M$ uses at most $p(n)$ space on inputs of length $n$.
Consider the languages $L_1$ and $L_2$ described above, except that
now we require that each configuration $c_i$ have length at most $p(|w|)$
and that $k \leq 2^{p(|w|)}$ (since there are at most $2^{p(|w|)}$
distinct configurations in any computation of $M$ on $w$).
We can construct $G_1$ and $G_2$ as follows.  We will actually
describe the construction of a PDA $M_1$ accepting $L_1$ (the
construction for $L_2$ is similar).

First, let us observe that we can count in binary up to $2^{p(n)}$ on 
$M_1$'s stack by using $O(p(n))$ states of the finite control.  These
states simply keep track of how many bits we are pushing or popping when
incrementing the counter.

We therefore recognize a word of the form
$c_1 \# c_2^R \# \cdots \# c_{k-1} \# c_k^R \#$ as follows.
We maintain a binary counter on the stack that counts the number of
$c_i$'s that we have currently processed.  Every time we encounter a
new pair $c_i \# c_{i+1}^R$ we interrupt the current computation on the
stack---let's say we push a new temporary bottom of stack symbol onto
the stack---and we process $c_i \# c_{i+1}^R$ just as in the standard
construction.

While reading each $c_i$ (or $c_i^R$), we must also verify that the length
of $c_i$ is at most $p(n)$.  We do this by adding polynomially many states
to the finite control of $M_1$; these states are used to keep track
of the length of each $c_i$ and to verify that this length does not exceed
$p(n)$.

After verifying that $c_{i+1}$ follows from $c_i$, the stack now once more
only contains the counter recording the number of $c_i$'s
processed so far.  We can now increment this counter and continue to
process the remaining pairs in the same manner.  After reading all of
the input we verify by popping the stack that there were at most $2^{p(n)}$
$c_i$'s.

Observe that since the length of each $c_i$ in $L_1$ is at most $p(|w|)$
and since $k \leq 2^{p(|w|)}$, the language $L_1$ consists of only finitely
many words.  Furthermore, the PDAs $M_1$ and $M_2$ can be constructed
in polynomial time.

Before proceeding further we require the following lemma, which appears
to be part of the folklore.  A weaker result
was stated without proof by Meyer and Fischer
\cite[Proof of Proposition~5]{MF71}.

\begin{lemma}
\label{meyer}
Let $M$ be a PDA with $n$ states and a stack alphabet of size $s$ that
accepts by empty stack and that either pushes a single symbol onto the
stack or pops
a single symbol from the stack on each move.  If $M$ accepts a finite language,
then for any input $w$ accepted by $M$, there is an accepting computation
for which the maximum stack height is at most $sn^2$.
\end{lemma}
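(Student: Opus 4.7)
The plan is to fix $w\in L(M)$ and analyze an accepting computation $\mathcal{C}$ of $w$ that uses the minimum possible number of moves; I will show that its maximum stack height $H$ cannot exceed $sn^2$. Let $t^*$ be a time at which the peak is attained, and write the stack contents at $t^*$ as $Y_1Y_2\cdots Y_H$ from bottom to top. Because the PDA is LIFO and each move changes the height by exactly one, each symbol $Y_h$ has a well-defined \emph{lifetime} $[\pi_h,\rho_h]$ running from the step at which it is pushed to the step at which it is popped; these intervals are strictly nested, $\pi_1<\cdots<\pi_H\le t^*\le\rho_H<\cdots<\rho_1$, and throughout $[\pi_h,\rho_h]$ the stack content below position $h$ is frozen and the stack height stays at least $h$.

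To each $h$ I would attach the signature $\sigma_h=(P_h,Y_h,Q_h)$, where $P_h$ is the state just after the push at $\pi_h$ and $Q_h$ is the state just before the pop at $\rho_h$. There are $n$ choices for each state component and $s$ for the stack symbol, giving only $sn^2$ possible signatures. So if $H>sn^2$, pigeonhole produces indices $h_1<h_2$ with $\sigma_{h_1}=\sigma_{h_2}$; call the common triple $(P,Y,Q)$.

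The central observation is that the segment of $\mathcal{C}$ running from $\pi_{h_1}$ through $\rho_{h_1}$ and the nested segment running from $\pi_{h_2}$ through $\rho_{h_2}$ are both ``excursions'' that begin in state $P$ with $Y$ on top and return to state $Q$ with $Y$ on top, without ever touching the stack below the initial baseline. Because a PDA move depends only on the current state and top-of-stack symbol, the move sequence of either excursion is legal inside the other: we can \emph{splice} one in place of the other. Let $v$ denote the input consumed by the inner excursion, and let $u_1,u_2$ denote the inputs consumed before and after the inner excursion inside the outer one, so that the outer excursion consumes $u_1 v u_2$.

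Now I split into two cases. If $u_1u_2=\epsilon$, then replacing the outer excursion by (the appropriately downward-shifted) inner one yields an accepting computation of the same word $w$ that uses strictly fewer moves, contradicting the minimality of $\mathcal{C}$. If $u_1u_2\neq\epsilon$, then repeatedly substituting the outer excursion in place of the inner one inside itself produces, for every $k\ge 1$, an accepting computation whose input is $w$ with the factor $u_1 v u_2$ replaced by $u_1^k v u_2^k$; these are pairwise distinct, contradicting the hypothesis that $L(M)$ is finite. Either case gives a contradiction, so $H\le sn^2$. The main obstacle I anticipate is rigorously justifying the splicing step: one must verify that a block of moves originally executed above baseline height $h_2$ remains legitimate when inserted at baseline $h_1$, which follows from the locality of PDA transitions together with the fact that during an excursion the stack never dips below its own baseline, so shifting the baseline does not threaten an underflow.
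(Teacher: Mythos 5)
Your proposal is correct and follows essentially the same route as the paper: take a shortest accepting computation, look at the moment of maximum stack height, assign to each stack level a triple (state at entry of the excursion, stack symbol, state at exit), apply pigeonhole with the bound $sn^2$, and then either splice out the repeated excursion (contradicting minimality when no extra input is read) or pump it (contradicting finiteness of $L(M)$). The only cosmetic difference is that you index excursions by the lifetimes of the symbols on the stack at the peak, while the paper uses the last/first times the computation is at each height before/after the peak; the argument is the same.
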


\begin{proof}
Consider a shortest accepting computation of $M$ on an input $w$.
Suppose that this computation has maximum stack height
$H > sn^2$ and that this height $H$ is reached after exactly $T$ steps.
For $i=1,2,\ldots,H$, let $l(i)$ denote the last time before time $T$
that the computation had stack height $i$ and let $r(i)$ denote the
first time after time $T$ that the computation had stack height $i$.
Let $C(i)=(p, A, q)$, where at time $l(i)$ the computation was in state $p$
with $A$ on top of the stack and at time $r(i)$ the computation was in
state $q$.  Since the stack height never dips below $i$ between times $l(i)$
and $r(i)$, the symbol on top of the stack at time $r(i)$ is the same
as at time $l(i)$.
There are only $sn^2$ distinct
triples $(p, A, q)$, so $C(i) = C(j)$ for some $i < j$.  We may therefore
write $w = uvwxy$ such that
\begin{itemize}
\item $u$ is the portion of the input processed after $l(i)$ steps;
\item $uv$ is the portion of the input processed after $l(j)$ steps;
\item $uvw$ is the portion of the input processed after $r(j)$ steps;
\item $uvwx$ is the portion of the input processed after $r(i)$ steps.
\end{itemize}
However, we now see that $uv^iwx^iy$ is accepted by $M$ for all positive
integers $i$.  Furthermore, $vx \neq \epsilon$---i.e., the portions of the
computation between times $l(i)$ and $l(j)$ and between times $r(j)$ and $r(i)$
do not consist entirely of $\epsilon$-transitions.  If indeed $vx = \epsilon$,
then we could obtain a shorter accepting computation of $M$ on $w$,
contradicting the assumed minimality of this computation.
Thus $M$ accepts an infinite language, a contradiction.
We conclude that the maximum stack height of a shortest accepting
computation is at most $sn^2$.
\end{proof}

We assume without loss of generality that all PDAs considered from now on
accept by empty stack and either push a single symbol or pop a single symbol
on each move.

\begin{theorem}
The following problem is PSPACE-complete:  ``Given PDAs
$A_1,A_2,\ldots,A_k$, each over the alphabet $\Sigma$, and each accepting
a finite language, does there exist $x \in \Sigma^*$ such that $x$
is accepted by each $A_i$, $1 \leq i \leq k$?''  The problem
is PSPACE-complete even when $k=2$.
\end{theorem}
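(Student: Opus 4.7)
The plan is to prove PSPACE membership by an NPSPACE simulation that exploits Lemma~\ref{meyer}, and PSPACE-hardness even at $k=2$ by invoking the construction of $M_1$ and $M_2$ developed in the paragraphs immediately before Lemma~\ref{meyer}.

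For the upper bound, I would argue as follows. By Lemma~\ref{meyer}, each $A_i$ has an accepting computation whose stack height is bounded by $s_i n_i^2$, which is polynomial in the input size; hence the joint configuration $(q_1,\sigma_1;\ldots;q_k,\sigma_k)$ of all $k$ PDAs can be stored in polynomial space. I would then describe an NPSPACE procedure that nondeterministically guesses a common input one symbol at a time while simulating every $A_i$ in parallel: at each step either some single $A_j$ takes an $\epsilon$-move, or a next input symbol $a\in\Sigma$ is guessed and each $A_i$ is required to read $a$ exactly once (possibly after $\epsilon$-moves). The total number of joint configurations is at most exponential, so a polynomial-bit step counter suffices to cut off non-accepting runs and prevent looping. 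Savitch's theorem then yields PSPACE membership. The mild technical point here is synchronizing all $k$ PDAs on a guessed common symbol without ever storing the full (possibly exponentially long) input; this is resolved by the on-line guess described above.

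For the lower bound, I would set $k=2$ and directly reuse the PDAs $M_1$ and $M_2$ constructed (in polynomial time) from an arbitrary polynomial-space-bounded TM $M$ and input $w$ in the discussion preceding Lemma~\ref{meyer}. Both $L(M_1)$ and $L(M_2)$ are finite (each configuration has length at most $p(|w|)$ and there are at most $2^{p(|w|)}$ of them), and by the alternating-configuration construction $L(M_1)\cap L(M_2)\neq\emptyset$ iff $M$ accepts $w$. Since acceptance by polynomial-space-bounded TMs is PSPACE-complete, this establishes PSPACE-hardness even for $k=2$. I expect the main obstacle not to be any single hard idea but rather the bookkeeping in the NPSPACE simulation: cleanly defining "one simulation step" on a joint configuration, bounding the step counter using the stack-height bound from Lemma~\ref{meyer}, and ensuring that the normal-form assumption on PDAs (accept by empty stack, single push or pop per move) is preserved by the reduction so that Lemma~\ref{meyer} applies.
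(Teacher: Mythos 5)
Your proposal matches the paper's proof essentially step for step: the same NPSPACE parallel simulation guessing the input one symbol at a time, with Lemma~\ref{meyer} bounding the stack heights, a polynomial-size counter to cut off $\epsilon$-loops, and Savitch's theorem for determinization; and the same hardness reduction using the PDAs $M_1$ and $M_2$ built from a polynomial-space-bounded TM, giving hardness already at $k=2$. The approach is correct and is the one the paper takes.
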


\begin{proof}
To show that the problem is in PSPACE, note that each $A_i$ has
an equivalent CFG $G_i$ whose size is bounded above by a polynomial
in the size of $A_i$.  Any word generated by $G_i$ has length
bounded above by a function exponential in the size of $G_i$.

We therefore give an NPSPACE algorithm as follows.  Guess a word $w$
one symbol at a time and simulate each $A_i$ in parallel on $w$.
By Lemma~\ref{meyer}, the total space required to store the stack
contents of $A_i$ during the simulation is at most $sn^2$, where
$s$ is the size of $A_i$'s stack alphabet and $n$ is the number of states
of $A_i$.  It follows that the total space required for the parallel
simulation of the $A_i$'s is polynomial in the combined size of the $A_i$'s.
We reject on any branch of the simulation that exceeds the bound on the
stack height.  There are $s^{sn^2}$ stack configurations total, so we can
keep an $O(sn^2)$ size counter to detect if we enter an infinite loop on
$\epsilon$-transitions; if so we reject on this branch of the simulation as
well.  This non-deterministic algorithm can then be determinized by Savitch's
theorem.

To see that the problem is PSPACE-hard, it suffices to
observe that given a polynomial space bounded TM $M$ and a word $w$,
we can construct the PDAs $M_1$ and $M_2$ described above in polynomial
time.  The language $L(M_1) \cap L(M_2)$ is non-empty if and only if $M$
accepts $w$.  This completes the reduction.
\end{proof}

\begin{theorem}
\label{square}
The following problem is PSPACE-complete:  ``Given a CFG $G$ generating a
finite language, does $G$ generate a square?''
\end{theorem}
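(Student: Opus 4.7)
My plan is to establish both directions using the machinery from the preceding two theorems: PSPACE-hardness by reducing from the finite-language PDA intersection problem, and PSPACE-membership by a direct parallel-simulation argument that invokes Lemma~\ref{meyer}.

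For hardness, given finite-language PDAs $A_1,A_2$ over $\Sigma$ and a fresh marker $\$$, I would convert each $A_i$ to an equivalent CFG $G_i$ (standard, polynomial time) and then form $G$ with a new start production $S \to \$ S_1 \$ S_2$, so that $L(G) = \$\, L(A_1)\, \$\, L(A_2)$, which remains finite. The combinatorial claim is that $L(G)$ contains a square iff $L(A_1) \cap L(A_2) \neq \emptyset$. Any word $\$ x_1 \$ x_2 \in L(G)$ contains exactly two $\$$'s, at positions $1$ and $|x_1|+2$; if it equals $yy$, then each copy of $y$ must contain exactly one $\$$, and the $\$$ at position $1$ of the first half forces a $\$$ at position $|y|+1$ of the second half, yielding $|y| = |x_1|+1$, hence $|x_1|=|x_2|$ and $y = \$ x_1 = \$ x_2$, so $x_1 = x_2$ lies in $L(A_1) \cap L(A_2)$. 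Conversely, if $x$ is in the intersection, then $\$ x \$ x \in L(G)$ is a square.

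For membership, convert $G$ in polynomial time to an equivalent PDA $A$ of the form required by Lemma~\ref{meyer} (single push/pop per move, accepting by empty stack). Since $L(A) = L(G)$ is finite, Lemma~\ref{meyer} supplies a polynomial bound $B$ on the maximum stack height in some shortest accepting computation of $A$ on each input. I would then give an NPSPACE algorithm: guess a configuration $C = (q,\gamma)$ of $A$ with $|\gamma| \leq B$; run two copies of $A$ in parallel, the first starting from the initial configuration and the second from $C$; at each round, either advance one copy by an $\epsilon$-transition or guess a symbol $a \in \Sigma$ and advance both copies on $a$; reject any branch whose stack exceeds $B$; non-deterministically halt and accept if the first copy is in configuration $C$ and the second is in an accepting configuration. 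A polynomial-size counter detects $\epsilon$-cycles, exactly as in the proof of the previous theorem. Correctness: if $ww \in L(A)$, then by Lemma~\ref{meyer} the configuration reached after $|w|$ input symbols in a shortest accepting computation of $ww$ has stack height at most $B$ and can serve as the guessed $C$; conversely, concatenating the two simulations yields an accepting computation of $A$ on $ww$. Savitch's theorem then converts NPSPACE to PSPACE.

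The main obstacle is the membership direction: the tempting reduction to finite-language PDA intersection via the palindrome language $\{u\$u^R\}$ fails because that language is infinite, so neither the previous theorem nor Lemma~\ref{meyer} applies directly to bound its stack. The parallel-simulation trick sidesteps this by guessing the midpoint configuration explicitly and running the two halves in lockstep on the same guessed input $w$, with Lemma~\ref{meyer} giving the crucial polynomial stack bound on each side independently.
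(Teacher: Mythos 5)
Your proposal is correct and follows essentially the same approach as the paper: the membership argument (guess the midpoint configuration $C$, run two copies of the PDA in lockstep on the guessed symbols of $w$, with Lemma~\ref{meyer} bounding the stack height, then determinize via Savitch's theorem) is exactly the paper's argument. The only cosmetic difference is in the hardness direction: the paper reduces directly from polynomial-space TM acceptance by forming $L(G_1)\#L(G_2)\#$ from the computation-history grammars, whereas you reduce from the finite-language PDA intersection problem of the preceding theorem via $\$\,L(A_1)\,\$\,L(A_2)$ --- the same separator trick, one reduction step removed, and equally valid.
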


\begin{proof}
To see that the problem is PSPACE, recall that if a $G$ generates a finite
language, there is an exponential bound on the length of the words in
the language.  We now convert $G$ to a PDA $M$ in polynomial time.
Let $M$ have $n$ states and a stack alphabet of size $s$.
We wish to guess the symbols of a word $w$ of length at most exponential
in the size of $G$ and verify that $M$ accepts $ww$.  By Lemma~\ref{meyer},
the maximum stack height of $M$ on input $ww$ is at most $sn^2$.
We therefore guess a configuration $C$ of $M$ of size $O(sn^2)$
and simulate two copies of $M$ on the guessed symbols of $w$, the first
starting from the initial configuration and the second starting from the
configuration $C$.  If the first simulation ends in configuration $C$
and the second simulation ends in an accepting configuration, then $M$
accepts $ww$.  Again, we can determinize this construction by Savitch's
theorem.

To see that the problem is PSPACE-hard, it suffices to
observe that given a polynomial space bounded TM $M$ and a word $w$,
we can construct the CFGs $G_1$ and $G_2$ described above in polynomial
time.  We can then construct a CFG $G$ to generate $L(G_1)\#L(G_2)\#$.
However, $L(G)$ contains a square if and only if $M$ accepts $w$.
This completes the reduction.
\end{proof}

\begin{theorem}
The problem ``Given an CFG $G$ and a pattern $p$, is there a non-erasing
morphism $h$ and a word $w$ in $L(G)$ such that $h(p)$ is a factor of $w$?''
is PSPACE-complete.
\end{theorem}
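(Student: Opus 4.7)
The plan is to establish membership and hardness separately.

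For PSPACE membership, I would first check whether $L(G)$ is infinite, which takes polynomial time. If so, the context-free pumping lemma gives $u v^i w x^i y \in L(G)$ for every $i$, with $vx$ nonempty; letting $z$ be whichever of $v, x$ is nonempty and sending every pattern variable to $z$ via $h$, we get $h(p) = z^{|p|}$ as a factor of $u v^{|p|} w x^{|p|} y \in L(G)$, so the answer is yes. Otherwise $L(G)$ is finite with longest word of length at most $2^{O(|G|)}$; convert $G$ to a PDA $M$ of the form Lemma~\ref{meyer} requires. Existentially guess the morphism lengths $n_\sigma = |h(\sigma)|$ for each pattern variable (polynomial bits), the factor's starting position $q$ in $w$, the total length $|w|$, and a PDA configuration of $M$ at each segment boundary $q + s_i, \, q + s_i + n_{p_i}$, each of polynomial size by Lemma~\ref{meyer}. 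Between consecutive boundaries, verify reachability in $M$ by Savitch's theorem. To enforce that the $r_\sigma$ segments of $w$ corresponding to different occurrences of the same pattern variable $\sigma$ are equal strings, I would check joint reachability in the product of $r_\sigma$ parallel copies of $M$: guess an $r_\sigma$-tuple of configurations and verify that some length-$n_\sigma$ symbol sequence drives all copies simultaneously from their starts to their ends. Since $\sum_\sigma r_\sigma \le |p|$, this generalizes the two-copy simulation of Theorem~\ref{square} and stays within polynomial space.

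For PSPACE hardness I would reduce from Theorem~\ref{square}. Given a CFG $G$ generating a finite language over $\Sigma$, pick a polynomial $N$ in $|G|$, introduce a new symbol $\# \notin \Sigma$, let $G'$ generate $\#^N L(G) \#^N$, and take the pattern $p' = d^N a a d^N$ with fresh variables $d$ and $a$. In the forward direction, if $x = ss \in L(G)$ then $h(d) = \#$, $h(a) = s$ gives $h(p') = \#^N x \#^N \in L(G')$, a factor (in fact the entire word). In the reverse direction, suppose $h(p') = h(d)^N h(a)^2 h(d)^N$ is a factor of $w = \#^N x \#^N$. A $\#$-counting argument yields $N|h(d)|_\# + |h(a)|_\# \le N$, and a combinatorial analysis of how the $N$ evenly spaced $\#$'s of $h(d)^N$ can align with the two $N$-blocks of $\#$'s in $w$ (separated by the $|x|$ non-$\#$ symbols of $x$) forces $|h(d)| = 1$, hence $h(d) = \#$. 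The prefix and suffix $\#^N$ of $h(p')$ then pin down the factor as all of $w$, giving $h(a)^2 = x$ as a square in $L(G)$.

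The chief technical obstacle is eliminating the ``spurious'' scenario in the converse direction, in which $h(d)$ and $h(a)$ both avoid $\#$ and $h(p')$ fits entirely inside $x$ without touching the padding; this requires $|x| \ge 2N|h(d)| + 2|h(a)| \ge 2N + 2$, and since words of $L(G)$ may be exponentially long, no polynomial $N$ alone suffices to block it. I would address this by first applying a preprocessing step that replaces $G$ with a CFG whose language has no internal factor of the form $u v^2 u$ with $u, v$ nonempty---for instance via a fresh-alphabet encoding that preserves whole-word squareness while breaking inner $uv^2u$-shaped factors---and then running the reduction on the preprocessed grammar. With the spurious case excluded, the two halves together establish PSPACE-completeness.
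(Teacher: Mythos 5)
Your PSPACE membership argument is essentially the paper's: check infiniteness (answering yes via pumping), otherwise convert $G$ to a PDA, use Lemma~\ref{meyer} to bound stack height, guess the image lengths, the position of $h(p)$ in $w$, and the configurations at the segment boundaries, and enforce equality of the images of repeated pattern variables by driving all the corresponding copies of the PDA in parallel with one commonly guessed string, exactly as in the extension of the proof of Theorem~\ref{square}. That half is sound.

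The hardness half has a genuine gap, and it is precisely the one you flag: the spurious match of $p'=d^N a a d^N$ strictly inside the $L(G)$-word $x$. The repair you propose --- a ``fresh-alphabet encoding that preserves whole-word squareness while breaking inner $uv^2u$-shaped factors'' --- cannot exist in the form described. A letter-wise encoding $e$ is a morphism and therefore preserves periodicity: if $L(G)=\{0^{2m+1}\}$ with $m$ exponential in $|G|$ (a polynomial-size grammar, and a no-instance of the square problem, since the only word has odd length), then $e(0^{2m+1})=e(0)^{2m+1}$ contains $e(0)^N e(0)e(0)e(0)^N$ as an interior factor for every polynomial $N$, so the transformed instance is a yes-instance and soundness fails. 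Conversely, any transformation that genuinely destroys interior repetitions in exponentially long, highly periodic words must be position-dependent, e.g.\ interleaving with a squarefree word; but such a transformation destroys whole-word squares (the perfect shuffle of $ss$ with a squarefree word $t=t_1t_2$ is the concatenation of the shuffles of $s$ with $t_1$ and with $t_2$, which differ since $t_1\neq t_2$), so completeness fails. These two requirements pull in opposite directions, and no construction meeting both is given; this tension is the actual crux of the theorem, not a technicality. The paper sidesteps it by not reducing from Theorem~\ref{square} at all: it reduces from {\bf DFA INTERSECTION}, shuffles a prefix of a single exponentially long squarefree word (generated by a polynomial-size grammar iterating Leech's morphism) into each component language, concatenates the $k$ components with $\#$ separators, and uses the pattern $a^k$. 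Since each segment is squarefree, any $k$-power factor is forced to be the entire word $(z\#)^k$, and the required equality is between $k$ independently generated blocks rather than between two halves of one word --- which is exactly what a padding-plus-encoding reduction from the square problem cannot enforce. Your remaining counting analysis for the padded case also omits some configurations (e.g.\ $h(a)$ containing $\#$, or the factor sitting inside a padding block), but those are repairable; the preprocessing step is not.
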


\begin{proof}
To see that it is in PSPACE, note that answer is always ``yes'' if $L(G)$ is
infinite (because then by the pumping lemma $L(G)$ contains words with
arbitrarily high powers as factors).  We can check if $L(G)$ is infinite
in polynomial time.  If $L(G)$ is finite, then the sizes of any morphism $h$
and word $w$ such that $h(p)$ is a factor of $w$ are bounded above
by a function exponential in the size of $G$.  We can therefore guess
the symbols of $w$, the lengths of the images of $h$, and the starting
position of $h(p)$ in $w$ in polynomial space.  We may then verify that
$h(p)$ is a factor of $w$ in polynomial space by a procedure analogous
to that described in the proof of Theorem~\ref{square}, which illustrated
the method for the case of a pattern $p=xx$ (i.e., a square) matching
$w$ exactly.

We begin by converting the CFG $G$ to a PDA $M$.  We then start
guessing the symbols of $w$ and simulating $M$ on $w$.  Recall that
by Lemma~\ref{meyer}, this simulation only requires $O(sn^2)$ space.
When we reach the guessed starting location of $h(p)$ in $w$, we record
the current configuration $C_0$ of the simulation and proceed
as follows.  Let $p = p_1p_2 \cdots p_\ell$.  We begin by guessing
configurations $C_1, C_2, \ldots, C_\ell$ such that for each
$1 \leq j \leq \ell$, the simulation of $M$ goes from configuration
$C_{j-1}$ to $C_j$ upon reading $h(p_j)$.  Note that since
each configuration has size $O(sn^2)$, we can record all of these
guessed configurations in polynomial space.  We verify our guesses as follows.
For each distinct symbol $x$ occurring in $p$, let $J_x = \{j : p_j = x\}$.
For each $j \in J_x$, we simulate (in parallel for all $j$) a copy of $M$
on the symbols of a guessed word $h(x)$ (whose length we have previously
guessed) starting in configuration $C_{j-1}$.  Again, since the stack height
is bounded by a polynomial in the size of $G$, and since $|J_x|$ is at most
$|p|$, the total space required for these parallel simulations is polynomial
in the input size.  We repeat these parallel simulations for all distinct
symbols $x$ occurring in $p$.  At this stage we have guessed and verified
the occurrence of $h(p)$ in $w$; we now guess the remaining symbols of $w$
to complete the simulation.

To show PSPACE-hardness we reduce from {\bf DFA INTERSECTION}.
Suppose we are given $k$ DFAs $M_1, M_2, \ldots, M_k$, the largest
having $n$ states.  We first observe that if a word $x$ is accepted
in common by all of the $M_i$'s, there is such an $x$ of length at most $n^k$.

We now construct a CFG $G$ that generates a single squarefree word $w$ of
length at least $n^k$.  Suppose we have a uniform morphism $h$ (over a 3-letter
alphabet disjoint from those of the $M_i$'s) that generates an infinite
squarefree word.  For example, we may take $h$ to be defined by the map
\begin{eqnarray*}
0&\to&0121021201210\\
1&\to&1202102012021\\
2&\to&2010210120102
\end{eqnarray*}
(see \cite{Lee57}).  We can generate an iterate of $h$ (i.e., one of the words
$h(0), h^2(0), h^3(0), \ldots$) of length at least $n^k$ with a grammar of
size $O(k \log n)$.  So we can construct the grammar $G$ generating $w$
in polynomial time.

We then construct a CFG $G'$ that generates all prefixes of words in $L(G)$.
Next we convert $G'$ to a PDA $N$ (in polynomial time).
Given the PDA $N$ and a DFA $M_i$, we can perform the standard
construction to obtain a PDA $A_i$ accepting the perfect shuffle
of $L(N)$ and $L(M_i)$.  We do this so that the strings accepted by
$A_i$ are all squarefree.

Now we convert all of the $A_i$'s into CFGs $B_i$ (we can do this in
polynomial time).  Next we construct a grammar $C$ that accepts the
language $L = L(B_1) \# L(B_2) \# \cdots \# L(B_k) \#$.

The size of the grammar $C$ accepting $L$ is just the sum of the sizes of
the $B_i$'s, so it remains polynomial in the combined sizes of the
original DFAs.  Now, since $L(B_i)$ consists only of squarefree strings,
a word $u$ in $L$ contains a $k$-power as a factor if and only if $u$ is
itself a $k$-power.  (Note also that $L$ is a finite language.)

Recall that if there is an $x$ accepted by all the $M_i$'s, there is
such an $x$ of length at most $n^k$.  Since the words in $L(G')$ have
length at most $n^k$, there is such an $x$ if and only if a string $z$
formed by the perfect shuffle of a word $x$ and a word in $L(G')$
is accepted by all of the $A_i$'s.  But this is true if and only if the
string $(z\#)^k$ is generated by $C$.  This in turn is true if and
only if $C$ generates a word with a $k$-power as a factor.

The entire construction can be done in polynomial time.  This
completes the reduction.
\end{proof}

\section{Acknowledgment}

    We thank Giovanni Pighizzini for suggesting how to improve the
bound from $(sn)^2$ to $sn^2$ in Lemma~\ref{meyer}.


\begin{thebibliography}{99}
\bibitem{ALR+09}
T. Anderson, J. Loftus, N. Rampersad, N. Santean, and J. Shallit,
Detecting palindromes, patterns and borders in regular languages.
To appear in \textit{Inform. and Comput.}

\bibitem{Ang80}
D. Angluin, Finding patterns common to a set of strings,
\textit{J. Comput. System Sci.} {\bf 21} (1980), 46--62.

\bibitem{GJ79}
M. R. Garey and D. S. Johnson, \textit{Computers and Intractability:
A Guide to the Theory of NP-Completeness}, Freeman, 1979.

\bibitem{HU79}
J. E. Hopcroft and J. D. Ullman, \textit{Introduction to Automata
Theory, Languages, and Computation}, Addison-Wesley, 1979.

\bibitem{Lee57}
J. Leech, A problem on strings of beads, \textit{Math. Gazette} {\bf 41} (1957),
277--278.

\bibitem{MF71}
A. R. Meyer and M. J. Fischer,
Economy of description by automata, grammars, and formal systems.
In {\it Proc. 12th Annual Symp. on Switching and Automata Theory},
IEEE Computer Society, 1971, pp.\ 188--191.
\end{thebibliography}
\end{document}